\newtheorem{theorem}{Theorem}
\newcommand{\bc}{\begin{center}}
\newcommand{\ec}{\end{center}}
\newcommand{\beq}{\begin{equation}}
\newcommand{\eeq}{\end{equation}}
\newcommand{\benum}{\begin{enumerate}}
\newcommand{\eenum}{\end{enumerate}}
\newcommand{\bea}{\begin{eqnarray*}}
\newcommand{\eea}{\end{eqnarray*}}
\newcommand{\ba}{\begin{array}}
\newcommand{\ea}{\end{array}}
\newcommand{\red}{\textcolor{red}}
\newcommand{\blue}{\textcolor{blue}}
\begin{document}
\title{Local rewiring algorithms to increase clustering and grow a small world}
\author{Jeff Alstott,\thanks{Massachusetts Institute of Technology/Singapore University of Technology and Design, alstott@mit.edu}
\and Christine Klymko, \thanks{Lawrence Livermore National Laboratory, klymko1@llnl.gov}
\and Pamela B. Pyzza, \thanks{Ohio Wesleyan University, pbpyzza@owu.edu}
\and Mary Radcliffe \thanks{Carnegie Mellon University, mradclif@math.cmu.edu}
}
\date{}
\maketitle
Many real-world networks have high clustering among vertices: vertices that share neighbors are often also directly connected to each other. A network's clustering can be a useful indicator of its connectedness and community structure. Algorithms for generating networks with high clustering have been developed, but typically rely on adding or removing edges and nodes, sometimes from a completely empty network. Here, we introduce algorithms that create a highly clustered network by starting with an existing network and rearranging edges, without adding or removing them; these algorithms can preserve other network properties even as the clustering increases. They rely on local rewiring rules, in which a single edge changes one of its vertices in a way that is guaranteed to increase clustering. This greedy step can be applied iteratively to transform a random network into a form with much higher clustering. Additionally, the algorithms presented grow a network's clustering faster than they increase its path length, meaning that network enters a regime of comparatively high clustering and low path length: a small world. These algorithms may be a basis for how real-world networks rearrange themselves organically to achieve or maintain high clustering and small-world structure. 

\section{Introduction}
\label{sec:intro}

Many complex systems are organized as networks, including social networks, gene interaction networks, the world wide web, and beyond \cite{BoLaMoCh06,BrEr05}.  The majority of complex networks from real-world systems have statistical properties that separate them from the random graphs studied more classically. These properties include a skewed or power-law degree distribution \cite{BaAl99,CaRoNe09}, a high clustering coefficient \cite{BaWe99,LuPe49}, and a low path length, creating a small world of connectivity \cite{WattsStrogatz1998}.

A network's clustering coefficient is a measure of how often two connected nodes will have neighbors in common. More formally, the clustering coefficient relates a network's number of triangles (3-cycles) to the possible number of triangles (given by the total number of wedges in the network). There are two common ways to define a clustering coefficient, which are related but different. The first, often called the global clustering coefficient, involves a global count of the number of triangles and possible triangles in a network.  The second involves an average of local counts. Formal definitions can be found in Section~\ref{sec:def}.  Real-world networks tend to have (relatively) high clustering coefficients under both definitions due to the presence of communities: if two nodes are in the same community, it not only increases the likelihood that there is an edge between them, but also that they have a common neighbor.  Due to this, the clustering coefficient is a good measure of the well-connectedness of a network and can indicate the presence or lack of strong community structure \cite{LaFo09,RaCaCeLoPa04}.

Closely related to the concept of the clustering coefficient of a network is the idea of the {\em small world} property found in many real world networks.  Informally, the small world property states that the average distance between any two nodes in a given network is relatively small and, in the case of an evolving network, the distance grows more slowly than the network grows.  In networks that are not fully connected, having a short average distance is opposed to having a high clustering coefficient. This is due to the fact that, in order for nodes in distant parts of the network to have short paths between each other, there must exist some long range edges between communities.  However, adding long range edges increases the number of possible triangles in a network without creating any new triangles, since nodes in different communities are unlikely to have neighbors in common, thus lowering the clustering coefficient.  There has been research focusing on how to jointly maximize these two properties under various constraints \cite{barmpoutis2010networks,ZhYaWa05} as well as on understanding processes which may lead to the dual development of both properties in both real world and generated networks \cite{ClMo03,HoKi02,LeGoKaKi04,newman2009random,PeBiTaCh09}.

Most research on generating networks with high clustering or low path length has focused on building these networks from scratch. However, in many real systems a network is already in existence. One may want to minimally rearrange existing edges in order to enhance or reduce certain network properties. The goal of these rewirings include increasing network robustness \cite{begrliri05,JiLiAn14,JiLiGu13,KoPuGa13,LoDaHeTo13,wuellner2010resilience,zhou2014memetic}, communicability \cite{ArBE15}, synchronizablity \cite{LiQiZhYu10} or algebraic connectivity \cite{SyScGr13}. Similar work has been done on the careful addition of edges to improve network features, including conductance \cite{zhou2016faster, papagelis2015refining}, closeness centrality \cite{parotsidis2016centrality, crescenzi2016greedily}, and path lengths \cite{parotsidis2016centrality, papagelis2015refining}. A common feature of many of these rewirings or edge additions is the focus on local properties of the network, such as an individual node and its neighborhood, in order to ultimately impact global properties of a network, such as conductance, robustness, and connectivity. Various algorithms have been proposed in the aformentioned papers to make the best use of such rewirings or edge additions.

Here we present edge rewiring algorithms which increases the clustering coefficient of a given network while minimally impacting other network properties, specifically degree distribution and average path length. We provide proofs that these algorithms monotonically increase the global clustering coefficient. We present basic notation and definitions in Section~\ref{sec:def}. We then describe the algorithms in Section~\ref{sec:algorithms}, along with  theoretical results about the effect of these algorithms on clustering. We numerically simulate the algorithms on model networks in Section~\ref{sec:experiments} and present results on real-world networks in Section~\ref{sec:real_experiments}. We discuss the implications of these algorithms in Section~\ref{sec:conclusions}.

\section{Notation and definitions}
\label{sec:def}
Let $G=(V, E)$ be a graph with a set of vertices (also referred to as nodes) $V$, $|V| = n$, and edge set $E = \{u\sim v\ | \ u,v \in V $ are connected by an edge$\}$, and write $\bar{E}$ for the complement of $E$ in ${V\choose 2}$. For a vertex $v\in V(G)$, $d_v$ denotes the {\em degree} of vertex $v$. We define the {\it neighborhood} of $v$ in $G$ to be $N(v) = \{u\in V(G)\ | \ u\sim v\}$. To simplify notation, we shall denote by $\overline N(v)$ the complement of $N(v)$ in $V(G)\backslash\{v\}$; that is, $\overline N(v)$ is the set of vertices in $G$ other than $v$ itself to which $v$ is not adjacent. Given two vertices $x, y$, define their {\it common neighborhood} to be $N(x, y)=\{u\in V(G)\ | \ u\sim x\hbox{ and }u\sim y\}$. For a set $S\subset V$, define $e(S)$ to be the number of edges in $G$ having both endpoints in $S$.

Define $N_p(G)$ and $N_t(G)$ to be the number of length-two paths and number of triangles in the graph, respectively. Note that \[N_p(G)=\sum_{x, y\in V(G)} |N(x, y)|,\] and \[3N_t(G)=\sum_{\{x, y\}\in E(G)} |N(x, y)|.\] We define the clustering coefficient of $G$ to be $$C(G)=3N_t(G)/N_p(G).$$

The above version of the clustering coefficient is sometimes referred to as the {\em global clustering coefficient} and we shall use that language at times in this work. Another measure of clustering, known as the {\em local average clustering coefficient}, is also in common usage. For a vertex $v\in V(G)$, let 
\[c_v = \frac{2e(N(v))}{d_v(d_v-1)}.\]
Here, we have that ${d_v\choose 2}$ is the number of length-two paths having $v$ at the center; that is, it is a measure of the number of triangles in which $v$ could be involved. Thus, $c_v$ can be seen as the proportion of these triangles that actually exist in the graph, compared to how many triangles are possible involving $v$. We then define the local average clustering coefficient to be $$c(G) = \frac{1}{n}\sum_{v\in V}c_v.$$ 

Although the global clustering coefficient and the local average clustering coefficient often produce similar results regarding clustering in the graph, they are not equivalent measures \cite[p.~83]{Es12}. The primary difference between the two is the amount of emphasis placed on lower degree vertices. In the local average clustering coefficient, each vertex is treated with equal weight and, hence, vertices of very low degree, for which $c_v$ might be unusually high or low, have a much more substantial impact on the constant than those of high degree. In the global version, this impact is avoided by considering the graph as a whole, rather than any specific local structure.

The {\it adjacency matrix} associated with a graph $G$ is given by $A = a(u,v)$ with 
$$a(u,v) = \left\{\begin{array}{ll}
1,& \textnormal{ if } u \sim v,\\
0, & \textnormal{ else. }
\end{array}\right .
$$

Given two sets $A, B\subset X$, we use the notation $A\triangle B$ to denote the symmetric difference between $A$ and $B$; that is, 
\[A\triangle B = (A\backslash B)\cup (B\backslash A).\]

\section{Algorithms}
\label{sec:algorithms}

In this section, we fully describe the algorithms used to rewire a graph to increase its clustering coefficient and prove that these algorithms are monotonic with respect to the global clustering coefficient. Further, we examine some of the local optima of these algorithms and consider the impact of the rewiring procedure on the local average clustering coefficient.  Finally, we briefly discuss their computational complexity.

\subsection{Algorithm Description}

To begin, let us examine the algorithms in question. The fundamental idea of the algorithms we employ here is as follows. We wish to identify a set of vertices $\{x, y, v\}$ such that $\{x, y\}$ are not adjacent, $\{x, v\}$ are adjacent, and further, that if we rewired the edge $\{x, v\}$ to $\{x, y\}$, the clustering coefficient of the graph will improve. Fundamentally, if this move is considered from the perspective of the vertex $x$, the goal is for $x$ to replace less valuable edges (with respect to clustering) with more valuable edges. This process is then iterated so that the clustering coefficient increases upon each edge movement. 

We shall consider two different, but similar algorithms. Both of these are based on the iterated protocol described above. The first version of the protocol is detailed in Rewiring Algorithm \ref{Algone} below. Essentially, the protocol is as follows. We first choose a nonedge $\{x, y\}$ whose addition to the graph $G$ would increase the number of triangles in $G$ as much as possible. We then look among the existing incident edges for an edge whose removal would decrease the number of triangles in $G$ as little as possible. If certain degree considerations are met, we then rewire this edge to $\{x, y\}$ to form a new graph $G'$.  More details about the degree requirements can be found in Thm.~\ref{thm:rewiring} and its proof. The edge movement is akin to swinging a door from one doorway to another. The algorithm's process can be described as finding the best open doorway and swinging a door toward it. This is illustrated in Figure~\ref{Fig:Firstalg}. 

\begin{algorithm}[H]
\floatname{algorithm}{``Swing Toward Best" Rewiring Algorithm}
\caption{}
\label{Algone}
\begin{algorithmic}[1]
\STATE Find a pair $\{x, y\}\in \bar{E}$ with the maximum value of $|N(x, y)|$
\STATE For $v\in N(x)$, define $f_v= |N(x, v)|$, and for $v\in N(y)$, define $f_v=|N(y, v)|$. Choose the vertex $v\in N(x)\triangle N(y)$ with minimum $f_v$; given a tie, choose $v$ to have the highest possible degree. WLOG, suppose $v\sim x$.
\STATE If $|N(x, v)|\geq |N(x, y)|$, return to step 1, and eliminate the edge $\{x, y\}$ from consideration.
\STATE If $d_v>d_y$, rewire the edge $vx$ to the edge $yx$ to form $G'$. 
\STATE If $d_v\leq d_y$, return to step 2, and eliminate the vertex $v$ from consideration.
\STATE If no neighbor in $N(x)\triangle N(y)$ satisfies the requirements, return to step 1 and choose a different pair of nonadjacent vertices.
\end{algorithmic}
\end{algorithm}

\begin{figure}[H]
\centering
\begin{tikzpicture}
\node[inner sep=0pt] (pic) at (0,0) {\includegraphics[width = 2.5 in]{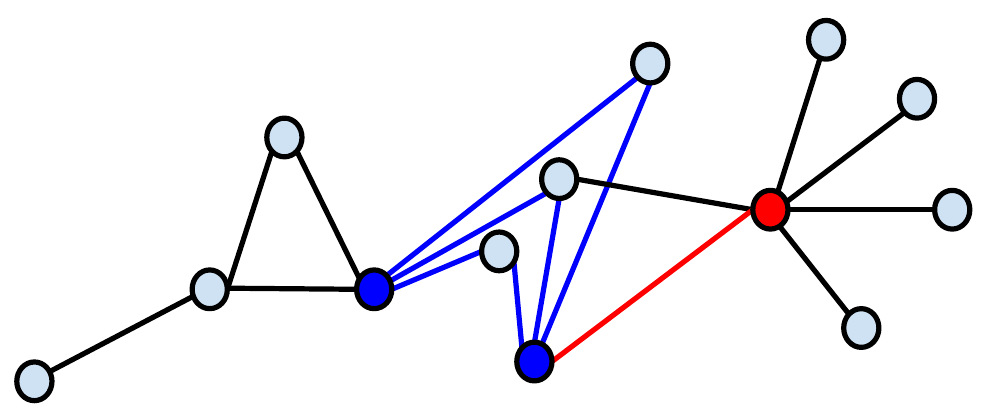} \includegraphics[width = 2.5 in]{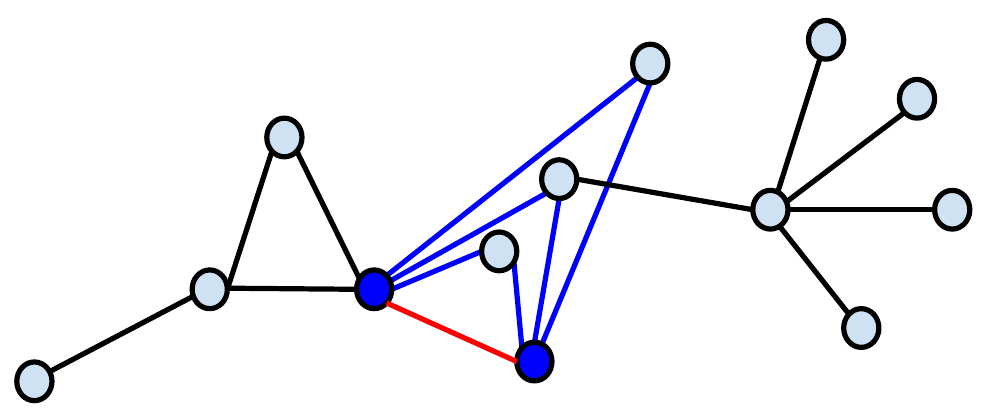}};
\node at (-4,-0.925) {\blue{$x$}};
\node at (2.5,-0.925) {\blue{$x$}};
\node at (-3,-1.4) {\blue{$y$}};
\node at (3.525,-1.4) {\blue{$y$}};
\node at (-1.5,-0.35) {\red{$v$}};
\node at (4.9,-0.35) {$v$};
\end{tikzpicture}
\caption{A graph $G$ being rewired using Rewiring Algorithm \ref{Algone}. Here, the two blue vertices are $x$ and $y$. Note that in the original graph (left), $x$ and $y$ have the largest number of common neighbors among nonedges of the graph. We then choose $v$ as the red vertex, as its rewiring would destroy the fewest number of triangles (open the fewest closed doors), and it has maximum degree among such vertices. After verifying that all the degree considerations are met, we rewire the edge $xv$ to form the new graph (right).\\}
\label{Fig:Firstalg}
\end{figure}

For the second version of the protocol, described below in Rewiring Algorithm \ref{Algtwo}, we take a slightly different approach to choosing an edge to rewire. In the first version, we first seek an open doorway, having as many incomplete triangles as possible. For the second version, we take the opposite approach, and choose an edge whose removal would destroy the fewest number of triangles in $G$ as possible; that is to say, the algorithm finds the closed door that is least beneficial to clustering and swings the door away to a more useful doorway. This approach can be much faster when the graphs in question are sparse; the remainder of the procedure is essentially the same as the first version of the algorithm. 

\begin{algorithm}[H]
\floatname{algorithm}{``Swing Away from Worst" Rewiring Algorithm}
\caption{}
\label{Algtwo}
\begin{algorithmic}[1]
\STATE Find a pair $\{x, v\}\in E$ with the minimum value of $|N(x, v)|$
\STATE For $y\in \overline{N}(x)$, define $f_y= |N(x, y)|$, and for $y\in \overline N(v)$, define $f_y=|N(y, v)|$. Choose a vertex $y\in \overline N(x)\triangle \overline N(v)$ with minimum $f_y$; given a tie, choose $y$ to have the lowest possible degree. WLOG, suppose $y\sim x$.
\STATE If $|N(x, y)|< |N(x, v)|$, return to step 1,  and eliminate the edge $\{x, v\}$ from consideration.
\STATE \label{Deg1}If $d_y<d_v$, rewire the edge $vx$ to the edge $yx$ to form $G'$. 
\STATE \label{Deg2} If $d_y\geq d_v$, return to step 2, and eliminate the vertex $y$ from consideration.
\STATE If no neighbor in $\overline N(x)\triangle \overline N(v)$ satisfies the requirements, return to step 1 and choose a different pair of nonadjacent vertices.
\end{algorithmic}
\end{algorithm}

\begin{figure}[H]
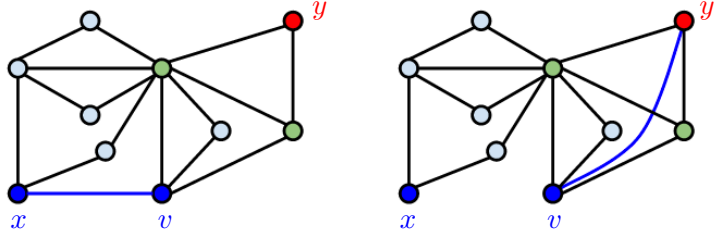

\centering
\begin{tikzpicture}
\node[inner sep=0pt] (pic) at (0,0) {\includegraphics[width = 2 in]{p21.pdf} \includegraphics[width = 2 in]{p22.pdf}};
\node at (-4.55,-1.6) {\blue{$x$}};
\node at (0.625,-1.6) {\blue{$x$}};
\node at (-2.6,-1.6) {\blue{$v$}};
\node at (2.575,-1.6) {\blue{$v$}};
\node at (-0.55,1.2) {\red{$y$}};
\node at (4.6,1.2) {\red{$y$}};
\end{tikzpicture}
\caption{A graph $G$ being rewired using Rewiring Algorithm \ref{Algtwo}. Here, the two blue vertices are $x$ and $v$. Note that in the original graph (left), $x$ and $v$ have the fewest number of common neighbors among edges of the graph; namely, $x$ and $v$ are not involved in any common triangles. We then choose $y$ as the red vertex, noting that rewiring $xv$ to $vy$ will add two triangles at the green vertices. After verifying that all the degree considerations are met, we rewire the edge to $vy$ to form the new graph (right).\\}\label{Fig:Secondalg}
\end{figure}

First, let us verify that the action of ``swinging a door,'' regardless of the algorithm itself, is monotonic with respect to clustering coefficient.

\begin{theorem}
\label{thm:rewiring}
Any individual rewiring performed satisfying the conditions of the algorithms above will strictly increase the clustering coefficient.
\end{theorem}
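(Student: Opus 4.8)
The plan is to analyse a single valid ``door-swing'' directly. Such a move deletes one edge and inserts one edge while fixing one endpoint (the ``pivot,'' which I will call $x$, with deleted edge $\{x,v\}$ and inserted edge $\{x,y\}$), so I want to compute how $N_t$ and $N_p$ change and then recombine them. The recombination is the elementary fact that for positive reals $\frac{a+\alpha}{b+\beta}>\frac ab$ iff $b\alpha>a\beta$ (given $b,b+\beta>0$); applied with $a=3N_t(G)$, $b=N_p(G)$ this says $C(G')>C(G)$ exactly when $N_p(G)\,\Delta N_t-N_t(G)\,\Delta N_p>0$, where $\Delta N_t=N_t(G')-N_t(G)$ and $\Delta N_p=N_p(G')-N_p(G)$ (the positivity of $N_p(G)$ and $N_p(G')$ is immediate, since $C$ is only considered on graphs with length-two paths and a triangle is created or destroyed). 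So the whole argument reduces to bounding $\Delta N_t$ and $\Delta N_p$ using the conditions in Steps~3--5.

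For $\Delta N_p$: since $N_p(G)=\sum_{w\in V}\binom{d_w}{2}$ and the swing lowers $d_v$ by one and raises $d_y$ by one while fixing all other degrees, $\Delta N_p=\binom{d_v-1}{2}-\binom{d_v}{2}+\binom{d_y+1}{2}-\binom{d_y}{2}=d_y-d_v+1$. Both algorithms perform the swing only when $d_v>d_y$ (Step~4, the alternative being returned to Step~2), so $\Delta N_p\le 0$, with $\Delta N_p<0$ unless $d_v=d_y+1$. For $\Delta N_t$: deleting $\{x,v\}$ destroys exactly one triangle per common neighbour of $x$ and $v$, i.e.\ $|N(x,v)|$ triangles, and inserting $\{x,y\}$ creates exactly one triangle per vertex that is a common neighbour of $x$ and $y$ in the graph \emph{after} the deletion. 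Since the only missing edge relative to $G$ is $\{x,v\}$, that count is $|N(x,y)|$ (computed in $G$) except that $v$ must be discarded from the common neighbourhood whenever $v\sim y$ in $G$. In Rewiring Algorithm~\ref{Algone}, $v$ is chosen from $N(x)\triangle N(y)$ with $v\sim x$, hence $v\not\sim y$, so no correction occurs: $\Delta N_t=|N(x,y)|-|N(x,v)|$, and Step~3 forces $|N(x,y)|>|N(x,v)|$, giving $\Delta N_t\ge 1$. Then $N_p(G)\,\Delta N_t-N_t(G)\,\Delta N_p\ge N_p(G)\cdot 1 - N_t(G)\cdot 0 = N_p(G)>0$, so $C(G')>C(G)$.

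The main obstacle is Rewiring Algorithm~\ref{Algtwo}. There $y$ is drawn from $\overline N(x)\triangle\overline N(v)$ and the swing requires $y\not\sim x$, which forces $y\in\overline N(x)$, hence $y\notin\overline N(v)$, hence $y\sim v$; so now $v$ \emph{is} a common neighbour of $x$ and $y$, only $|N(x,y)|-1$ triangles are created, $\Delta N_t=|N(x,y)|-|N(x,v)|-1$, and Step~3 only yields $|N(x,y)|\ge|N(x,v)|$, i.e.\ $\Delta N_t\ge -1$. One therefore has to show $N_p(G)\,\Delta N_t-N_t(G)\,\Delta N_p>0$ in spite of this one-triangle deficit. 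I would do this by cases on $\Delta N_t$: if $\Delta N_t\ge 1$ the Algorithm~\ref{Algone} computation applies verbatim; if $\Delta N_t\le 0$ I would lean on the strict decrease of $N_p$ forced by the degree test together with $N_t(G)\ge 1$ (any common neighbour of $x$ and $y$ other than $v$ is already the apex of a triangle of $G$, so the only way $\Delta N_t$ can fail to be positive is when $v$ is the unique common neighbour — a configuration one must argue is ruled out by Step~3's strict comparison), and the residual boundary cases are dispatched by weighing the small gain in one quantity against the small loss in the other. Pinning down exactly which pairs $(\Delta N_t,\Delta N_p)$ Steps~3--5 actually permit, and thereby closing that last degenerate case, is the crux of the argument.
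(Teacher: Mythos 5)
Your reduction and your treatment of Algorithm~\ref{Algone} are correct and are, in substance, the paper's own proof: the paper fixes a triple with $\{x,y\}\in\bar{E}$, $\{x,v\}\in E$, $|N(x,y)|>|N(x,v)|$, $d_v>d_y$, computes $N_t(G')=N_t(G)-|N(x,v)|+|N(x,y)|$ and $N_p(G')=N_p(G)+d_y-d_v+1$, and concludes because the numerator of $C$ strictly increases while the denominator does not; your cross-multiplication criterion is an equivalent packaging of the same computation. Where you diverge is the second algorithm, and there the divergence is not resolved. The paper does not carry out the case analysis you set up: it simply asserts that the configuration above ``accommodates both versions of the protocol'' after relabeling, i.e.\ it proves the theorem only for swings satisfying the strict inequality, and its triangle count is exact only when $v\not\sim y$. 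The wrinkle you isolate --- that in Algorithm~\ref{Algtwo} the chosen $y$ lies in $\overline N(x)\triangle\overline N(v)$ and is therefore adjacent to $v$, so only $|N(x,y)|-1$ triangles are created while Step~3 there enforces only $|N(x,y)|\ge|N(x,v)|$ --- is real and is simply not addressed in the paper.

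The genuine gap is that you do not close that case, and the route you sketch cannot close it. Step~3 of Algorithm~\ref{Algtwo} is not strict, so it does not exclude $\Delta N_t\le 0$; your parenthetical claim that $\Delta N_t\le 0$ forces $v$ to be the unique common neighbour of $x$ and $y$ fails whenever $|N(x,v)|\ge 1$ (e.g.\ $|N(x,v)|=3$, $|N(x,y)|=4$ gives $\Delta N_t=0$ with four common neighbours); and the ``residual boundary cases'' cannot be dispatched by weighing gains against losses, because the literal conditions of Steps~2--5 admit legal swings that do not strictly increase $C$. Concretely, let $G$ be the star with centre $v$ and leaves $x,y,w$: every edge has empty common neighbourhood, so Step~1 may select $\{x,v\}$; Step~2 may select $y$ (here $f_y=f_w=1$ and $d_y=d_w=1$); Step~3 passes since $1\ge 0$; Step~4 passes since $d_y=1<3=d_v$; yet the swing yields another triangle-free graph, so $C$ stays at $0$. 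Likewise $|N(x,y)|=|N(x,v)|\ge 1$ with $d_v=d_y+1$ gives $\Delta N_t=-1$, $\Delta N_p=0$ and a strict decrease. So the unresolved case is not a technicality you merely ran out of room to finish: for Algorithm~\ref{Algtwo} the theorem is provable only under the stronger, relabeled hypotheses that the paper's proof tacitly uses (the triangles gained by the new edge, counted with $v$ excluded, strictly exceed those destroyed), and a correct write-up must either adopt that reading of Steps~2--3 or restrict the claim to Algorithm~\ref{Algone}. As it stands, your attempt leaves the second algorithm unproved.
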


\begin{proof}

Suppose we have three vertices $x, y, v$ such that
\begin{itemize}
\item $\{x, y\}\in \bar{E}$, $\{x, v\}\in E$
\item $|N(x, y)|> |N(x, v)|$
\item $d_v>d_y$.
\end{itemize}
Then the rewiring of edge $xv$ to edge $xy$ is permitted according to either of the algorithms described above. We note that this situation accommodates both versions of the protocol, although the vertex labels are changed for the second version. Let $G'=(G\backslash\{xv\})\cup \{xy\}$. Let us consider how this rewiring affects the clustering coefficient. Specifically, we need only consider the total number of triangles and the total number of length 2 paths in the new graph $G'$.

First, we consider triangles. We note that the only triangles that will be present in $G$ but not in $G'$ are those that have $xv$ as an edge. On the other hand, the only triangles that will be present in $G'$ but not in $G$ are those that have $xy$ as an edge. Hence, we have $N_t(G')=N_t(G)-|N(x, v)|+|N(x, y)|> N_t(G)$.

Likewise, the only length-two paths that appear in $G$ but not $G'$ are those involving the edge $xv$; we note that there are $(d_v-1)+(d_x-1)$ such paths. Similarly, the only length-two paths that appear in $G'$ but not in $G$ are those involving the edge $xy$, of which there are $(d_x-1)+d_y$. Hence, we have $N_p(G')=N_p(G)-(d_v-1)-(d_x-1)+(d_x-1)+d_y = N_p(G)+d_y-d_v+1\leq N_p(G)$.

Therefore, we have that $C(G')=3N_t(G')/N_p(G')>3N_t(G)/N_p(G)=C(G)$.

\end{proof}

We note here that there are many possible variants on the choice of doorway; the fundamental piece of the algorithm is the swing itself. Indeed, in Section \ref{sec:experiments}, we shall also examine the algorithm in a regime in which doorways are chosen randomly, rather than greedily as described in the above algorithms.

\subsection{Local optima}
We now turn to a consideration of local optima under this algorithm. As the algorithm is strictly monotone with respect to clustering, one would expect that any such optima will have a high clustering coefficient. Indeed, as the theorem below shows, this will be the case for our technique.

\begin{theorem}\label{localopt}
Let $G$ be a graph, such that the edges of $G$ can be partitioned into cliques of size at least 3, say $C_1, C_2, \dots, C_k$. Then $G$ is a local optimum with respect to the above algorithm.
\end{theorem}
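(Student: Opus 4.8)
The goal is to show that on such a $G$ the algorithm is stuck: no edge‑move meeting the acceptance conditions of Rewiring Algorithm~\ref{Algone} or~\ref{Algtwo} is available, so that by Theorem~\ref{thm:rewiring} the clustering coefficient cannot be increased by a single move and $G$ is a local optimum. Using the uniform description of a move from the proof of Theorem~\ref{thm:rewiring}, it is enough to rule out a triple $x,y,v$ with $\{x,y\}\in\bar E$, $\{x,v\}\in E$, $v\not\sim y$ (the case one reduces to, since both algorithms pick the rewired vertex from a symmetric difference of neighbourhoods), $|N(x,y)|>|N(x,v)|$, and $d_v>d_y$. The engine of the proof is the rigidity of an edge‑clique partition: because $E(C_1),\dots,E(C_k)$ partition $E$, any two distinct cliques share at most one vertex, and hence for every vertex $z$ the set $N(z)$ is the \emph{disjoint} union $\bigsqcup_{a:\,z\in C_a}(C_a\setminus\{z\})$, so that $d_z=\sum_{a:\,z\in C_a}(|C_a|-1)$.

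Next I would rewrite the two quantities governing a move in terms of clique overlaps. Let $C_i$ be the unique clique containing $\{x,v\}$. Then $N(x,v)\supseteq C_i\setminus\{x,v\}$, so the ``$C_i$‑part'' of $|N(x,v)|$ is exactly $|C_i|-2$, and any remaining common neighbour of $x,v$ is the single shared vertex of a clique through $x$ other than $C_i$ and a clique through $v$ other than $C_i$. Similarly, grouping the common neighbours of $x$ and $y$ by the clique of $x$ in which each lies, and using $v\not\sim y$, at most $|C_i|-2$ of them lie in $C_i$, while the rest come from cliques through $x$ other than $C_i$ meeting cliques through $y$. Hence $|N(x,y)|>|N(x,v)|$ forces the total ``off‑$C_i$'' overlap of $x$'s cliques with $y$'s cliques to strictly exceed that with $v$'s cliques. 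I would then bound each side clique‑by‑clique: for a clique $C_m\ni y$ at most $|C_m|-1$ of its vertices can be neighbours of $x$, so the $x$–$y$ overlap is at most $\sum_{m:\,y\in C_m}(|C_m|-1)=d_y$, and likewise the off‑$C_i$ part of the $x$–$v$ overlap is at most $\sum_{l\ne i:\,v\in C_l}(|C_l|-1)=d_v-(|C_i|-1)$. Feeding in $d_v>d_y$, together with the maximality of $|N(x,y)|$ among non‑edges (applied, e.g., to the non‑edge $\{v,y\}$) and the minimality plus degree tie‑break that selected $v$ in step~2, the plan is to squeeze out a numerical contradiction.

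The step I expect to be the real obstacle is exactly this last accounting. The inequalities produced by the clique structure have the right shape but are not individually sharp, and what must be excluded is the configuration in which $v$ belongs to several additional cliques disjoint from every clique through $x$ (so they inflate $d_v$ without adding to $N(x,v)$ or $N(x,y)$) while $y$ lies in comparatively few cliques; ruling this out is precisely where one must genuinely combine the fact that $\{x,y\}$ has \emph{maximum} common neighbourhood with the \emph{way} $v$ was chosen, and it is the delicate part of the argument. A natural and robust fallback — and the cleanest place to see the mechanism — is to first dispose of the case in which $G$ is a disjoint union of cliques of size $\ge 3$, where every non‑edge has empty common neighbourhood and the conclusion is immediate, and then try to reduce the general edge‑partition statement to it.
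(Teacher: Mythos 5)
Your reduction of local optimality to the nonexistence of a triple $x,y,v$ with $\{x,y\}\in\bar{E}$, $\{x,v\}\in E$, $v\not\sim y$, $|N(x,y)|>|N(x,v)|$ and $d_v>d_y$ is the right frame, and the clique-overlap bookkeeping (two cliques of the partition share at most one vertex, neighborhoods decompose clique-by-clique, $d_z=\sum_{a:z\in C_a}(|C_a|-1)$) is correct as far as it goes. But the proposal stops exactly at the step that would constitute the proof: no contradiction is ever extracted from $|N(x,y)|>|N(x,v)|$ together with $d_v>d_y$, and the configuration you flag as the obstacle --- $v$ lying in several cliques disjoint from every clique through $x$ or $y$, inflating $d_v$ without contributing to $N(x,v)$ or $N(x,y)$ --- is not a bookkeeping nuisance that maximality of $|N(x,y)|$ and the selection rule for $v$ can squeeze away. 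It actually occurs under the hypothesis of Theorem~\ref{localopt} and produces an accepted move. Concretely, let $G$ be the union of the six triangles $\{x,a,b\}$, $\{x,c,d\}$, $\{y,a,e\}$, $\{y,c,f\}$, $\{b,g,h\}$, $\{b,i,j\}$; any two share at most one vertex, so the edges are partitioned into cliques of size $3$. The nonedge $\{x,y\}$ has $N(x,y)=\{a,c\}$, of maximum size $2$ among nonedges; in $N(x)\triangle N(y)=\{b,d,e,f\}$ every candidate has $f$-value $1$ and the degree tie-break selects $b$ (with $d_b=6$); and $|N(x,b)|=1<2$ while $d_b=6>4=d_y$, so Algorithm~\ref{Algone} rewires $xb$ to $xy$. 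By Theorem~\ref{thm:rewiring} (or directly: triangles go from $6$ to $7$ and wedges from $46$ to $45$) this strictly increases $C(G)$, so this $G$ is not a local optimum. Hence the delicate accounting you postponed cannot be completed along the proposed lines, and the fallback case you do handle (vertex-disjoint cliques, where every nonedge has empty common neighborhood) does not reduce the general statement to itself.

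For comparison, the paper's proof is a short case analysis on a nonedge $\{x,y\}$: either $|N(x,y)|=0$, or a clique through $x$ meets a clique through $y$ and the proof asserts $|N(x,y)|=1$, which is then blocked by $|N(x,v)|\geq 1$ for every neighbor $v$ of $x$ (each edge lies in a clique of size at least $3$). That argument covers exactly your easy cases, and its assertion $|N(x,y)|=1$ is the same pressure point you identified: it tacitly assumes that only one pair of cliques through $x$ and through $y$ can intersect, which fails as soon as two cliques through $x$ meet two cliques through $y$ at distinct vertices --- a four-cycle through the nonadjacent pair $x,y$, as in the example above, where $|N(x,y)|=2$. So your instinct about where the difficulty lies was accurate; the correct conclusion is that the statement needs an additional hypothesis (for instance that the cliques of the partition are pairwise vertex-disjoint, or more generally that every nonadjacent pair has at most one common neighbor), under which both your outline and the paper's argument close immediately, with no further accounting required.
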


\begin{proof}
We need only show that there are no legal rewires to be performed under the algorithm. We recall that in order to rewire $vx$ to $yx$, we must have that $|N(x, v)|<|N(x, y)|$, and hence if $N|(x, y)|=0$, there are no edges that can be rewired to $xy$. Moreover, note that any two cliques can share at most one vertex, as the cliques partition the edges of $G$.

Now, suppose that $x, y\in \bar{E}$. Now, if $d_x=0$ or $d_y=0$, then $|N(x, y)|=0$ and there will be no legal rewires.

If not, both $x$ and $y$ appear as members of at least one clique. Note that they are not in the same clique, as if they were, we would have $x\sim y$. If the cliques including $x$ and the cliques including $y$ share no vertices, then $|N(x, y)|=0$, and there will be no edge to rewire to $xy$.

Hence, we may assume that $x\in C_i$, $y\in C_j$, where $|V(C_i)\cap V(C_j)|=1$. Thus, $x$ and $y$ share as a neighbor the vertex at which the two cliques intersect, and hence $|N(x, y)|=1$. Now, note that if $v$ is a neighbor of $x$, then $v$ and $x$ appear in some clique of size at least 3 together, and hence $|N(x, v)|\geq 1$. But then $xv$ cannot be rewired to $xy$. As the same will be true of neighbors of $y$, there is no edge that satisfies the requirements of the algorithm.

Therefore, $G$ is locally optimal with respect to the algorithm.
\end{proof}

We note that the clustering coefficient of these graphs will be quite close to 1. Taking $G$ to be as described in Theorem \ref{localopt}, define $H$ to be the graph on $[k]$, wherein $i\sim j$ if and only if $C_i$ and $C_j$ share a vertex. We then have
\begin{eqnarray*}
C(G) & = & \frac{3N_t(G)}{N_p(G)}\\
& = & \frac{ 3\displaystyle\sum_{i=1}^{k}{n_i\choose 3}}{\displaystyle\sum_{i=1}^{k}n_i{n_i-1\choose 2} + \displaystyle\sum_{i\sim_Hj}n_in_j}\\
& \geq & \frac{\displaystyle\sum_{i=1}^{k}n_i{n_i-1\choose 2} }{\displaystyle\sum_{i=1}^{k}n_i{n_i-1\choose 2} +\sum_{i, j}n_in_j}
\end{eqnarray*}

Clearly, the fewer common vertices we have among cliques, the higher the clustering coefficient will be. Moreover, if we imagine that $k$ is fixed, but the number of vertices in the graph is tending to $\infty$, then $C(G)$ tends to $1$ asymptotically.

To the best of these authors' knowledge, the optimum connected graph on $n$ vertices with a fixed number of edges $m$ with respect to the global clustering coefficient $C(G)$ is unknown. We note that these algorithms as written do not necessarily require that the edge rewiring preserves connectivity or components in the graph $G$, although it is clear based on the structure that this algorithm cannot combine two components into one. However, it is straightforward to construct examples in which the algorithms will rewire a bridge, thus disconnecting a formerly connected graph (one such example is shown in Figure~\ref{Fig:bridge}
). Moreover, we note that there are local optima that do not take the form described above; for example, a barbell graph in which two complete graphs are connected by exactly one edge cannot be partitioned into cliques of size at least 3, but it is still optimal with respect to this rewiring algorithm.

\begin{figure}[H]
\centering
\includegraphics[width = .7\textwidth]{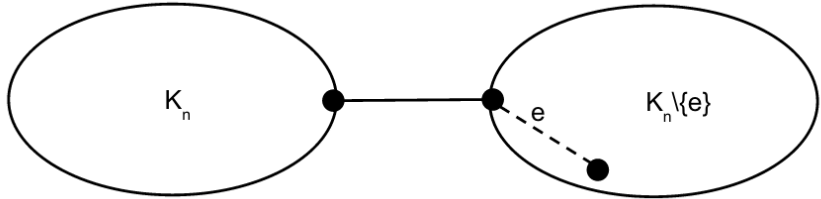}
\caption{A graph $G$ in which the only legal rewiring will disconnect the graph.\\}\label{Fig:bridge}
\end{figure}

Finally, we consider the case of a ring lattice with respect to this algorithm. Define the ring lattice $L(n, k)$ to be the graph with $n$ vertices, labeled as $v_1, v_2, \dots, v_n$, and having $v_i$ adjacent to $v_j$ if and only if $|i-j|\leq k 
\mod n$. We note that in the Watts-Strogatz experiment described in \cite{WattsStrogatz1998}, it is this lattice that is rewired randomly to produce a random graph. As our algorithm increases clustering and decreases path length, essentially reversing the process of the experiment performed by Watts-Strogatz, one might expect that the algorithm cannot improve upon the ring lattice; that is indeed the case.

\begin{theorem}
Let $2\leq k < \frac{n}{4}$. Then $L(n, k)$ is locally optimal with respect to the above algorithms.
\end{theorem}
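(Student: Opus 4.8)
The plan is to show directly that, when started from $L(n,k)$, neither Rewiring Algorithm \ref{Algone} nor Rewiring Algorithm \ref{Algtwo} can ever perform a legal rewiring; by the definition used above (cf.\ the proof of Theorem \ref{localopt}), this is exactly what it means for $L(n,k)$ to be a local optimum. The whole argument rests on one elementary fact: $L(n,k)$ is regular.

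First I would record that under the hypothesis $2\le k<n/4$ --- in particular $2k<n$ --- each vertex $v_i$ of $L(n,k)$ has exactly the $2k$ neighbors $v_{i\pm 1},v_{i\pm 2},\dots,v_{i\pm k}$ (indices mod $n$), which are $2k$ distinct vertices none of which is $v_i$; hence $L(n,k)$ is $2k$-regular, and since $2k<n-1$ it is not complete. Next, recall from the two algorithms (step 4 in each) together with the conditions isolated at the start of the proof of Theorem \ref{thm:rewiring} that any rewiring legal under either algorithm swings an edge $xv$ onto a nonedge $xy$ only when $d_v>d_y$, i.e.\ only when the endpoint $v$ being released has strictly larger degree than the endpoint $y$ being acquired. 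In $L(n,k)$, $d_v=d_y=2k$ for every choice of $v$ and $y$, so $d_v>d_y$ never holds: whenever the algorithm reaches the degree test it fails, and step 5 --- discarding $v$ --- is taken instead. Since candidate vertices and candidate nonedges are then discarded one after another, the algorithm terminates without ever modifying the graph, so $L(n,k)$ is a local optimum for both algorithms.

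I do not expect a genuine obstacle here, but one point deserves a clarifying remark, since it explains why the hypotheses are stated the way they are rather than being genuinely needed: this local optimality is a property of the greedy procedure, not of the clustering coefficient itself. A clustering-increasing edge move actually exists on $L(n,k)$ --- for instance, with $x=v_i$, the nonedge $\{x,v_{i+k+1}\}$ has $k$ common neighbors while the edge $\{x,v_{i+k}\}$ has only $k-1$, so swinging $xv_{i+k}$ onto $xv_{i+k+1}$ increases $N_t$ by $1$ and increases $N_p$ by $d_y-d_v+1=1$, which strictly raises $3N_t/N_p$ because $N_p(L(n,k))=nk(2k-1)$ exceeds $N_t(L(n,k))=\tfrac{1}{2}nk(k-1)$. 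It is only the conservative degree test $d_v>d_y$ --- a sufficient, not necessary, condition guaranteeing $N_p$ does not grow --- that keeps the regular lattice fixed. The hypotheses $k\ge 2$ and $k<n/4$ serve only to make this a statement about a bona fide ring lattice (one that contains triangles and is far from complete) and to ensure the common-neighborhood counts used above are the non-wraparound ones.
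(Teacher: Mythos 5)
Your main argument is correct and complete, and it takes a genuinely different route from the paper. The paper's operative notion of local optimality (made explicit in the proof of Theorem \ref{localopt}) is that no rewiring legal under the algorithms exists, and that is exactly what you establish: both Rewiring Algorithms \ref{Algone} and \ref{Algtwo} execute a move only when the strict inequality $d_v>d_y$ holds (step 4 of each), and since $2\leq k$ and $2k<n/2$ make $L(n,k)$ a $2k$-regular graph, that test can never be passed, so no edge is ever moved. The paper instead argues for the stronger assertion that \emph{no} single edge rewiring of $L(n,k)$, legal or not, can raise $C(G)$, by showing that a rewiring never increases the triangle count while always increasing the wedge count; that route makes no use of the degree test. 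Your route is shorter and more robust: it applies verbatim to any regular graph, whereas the counting argument is delicate and specific to the lattice.

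The one genuine error is in your aside. The move you exhibit, swinging $xv_{i+k}$ onto the nonedge $\{x,v_{i+k+1}\}$, does not increase $N_t$: the released vertex $v_{i+k}$ is itself one of the $k$ common neighbors of $x=v_i$ and $y=v_{i+k+1}$, so the formula $\Delta N_t=|N(x,y)|-|N(x,v)|$ overcounts by one (the prospective triangle $\{v_i,v_{i+k},v_{i+k+1}\}$ requires the very edge you delete). In fact $\Delta N_t=(k-1)-(k-1)=0$ while $\Delta N_p=+1$, so that particular move strictly \emph{decreases} clustering, consistent with the paper. Your broader point can, however, be salvaged with a different witness: rewiring $\{v_i,v_{i+k}\}$ to the opposite-side nonedge $\{v_i,v_{i-k-1}\}$ gains the full $k$ triangles (here the released vertex is not a common neighbor, using $k<n/4$) while destroying only $k-1$, and since $N_p(L(n,k))=nk(2k-1)>\tfrac12 nk(k-1)=N_t(L(n,k))$, this raises $C$; for instance with $n=20$, $k=4$ the clustering moves from $9/14$ to $11/17$. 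So a clustering-increasing single rewiring does exist on $L(n,k)$, the local optimality really is a consequence of the degree gate, and the counting claim at the heart of the paper's proof (``a rewiring cannot increase the number of triangles'') does not hold as stated; your degree-based argument is therefore not merely an alternative but the one that actually secures the theorem. If you keep the aside, correct the example and the triangle-count bookkeeping accordingly.
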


\begin{proof}
Note that it is sufficient to show that any single edge rewiring on $L(n, k)$ cannot increase the clustering coefficient. To that end, let us suppose that we have an edge $v_iv_j$ that will be rewired to the edge $v_iv_\ell$. Note here that it must be the case that $|i-j|<k$; wolog let us suppose that $i=1$ and $1<j\leq k+1$. Note that the edge $v_1v_j$ in $L(n, k)$ participates in $2k-j$ distinct triangles, namely those triangles with vertices $(v_1, v_j, v_t)$, where $n+(j-k)\leq t\leq n$ and $2\leq t\leq k+1$, provided $t\neq j$. Note moreover that $2k-j\geq k-1$.

Let $L'$ be the graph obtained by rewiring $v_1v_j$ to $v_1v_\ell$. Note that if $n/2<\ell<n-k+1$, then the only triangles that the edge $v_1v_j$ participates in are those for which the third vertex, $v_t$, has $\ell<t\leq n$ and $t-\ell<k, n+1-t<k$. Note that there are at most $k-1$ such triangles, when $\ell=n-k$. Hence, this cannot increase the number of triangles.

Likewise, if $v_1v_j$ is rewired to an edge $v_1v_\ell$ having $\frac{n}{2}\leq\ell>k+1$, we create at most $k-1$ triangles in $L'$.

Hence, we cannot increase the number of triangles that appear in $L'$ by a single edge rewiring. Moreover, if we consider the number of wedges in $L$ and $L'$, we obtain
\begin{eqnarray*}
N_p(L(n, k)) &=& \sum_{v\in V} {\deg(v)\choose 2}\\
& = & n {2k\choose 2}, and
\end{eqnarray*}

\begin{eqnarray*}
N_p(L'(n, k)) & = & \sum_{v\in V}{\deg(v)\choose 2}\\
& = & (n-2){2k\choose 2} + {2k-1\choose 2} + {2k+1\choose 2}
\end{eqnarray*}

It is trivial to compute that $N_p(L')>N_p(L)$. Hence, a rewiring cannot increase the number of triangles, but always increases the number of paths of length 2. Hence, no rewiring can increase the clustering coefficient, and $L(n, k)$ is locally optimal with respect the described algorithms.

\end{proof}

\subsection{Degree sequences}

It is clear from the definition of the algorithms above that the degree sequence in $G$ will not be preserved under these rewirings. Indeed, by examining steps \ref{Deg1} and \ref{Deg2} in the algorithm statements, it can be seen that this algorithm always rewires edges from higher degree nodes to lower degree nodes. Although it might seem, based on this feature, that the rewired graph would tend toward regularity, it can be seen empirically that this is not the case. If we expect that high degree nodes will control large clusters, the edges that are rewired are those that are not actually involved in many triangles with that high degree node. 

This being said, the algorithm can be modified to allow for a degree-preserving version. In this version, rather than rewiring one edge, we must rewire two edges at a time, as follows. 

\begin{algorithm}[H]
\floatname{algorithm}{Degree Sequence Preserving Rewiring Algorithm}
\caption{}
\label{Algthree}
\begin{algorithmic}[1]
\STATE Find a pair $\{x, y\}\in \bar{E}$ with a maximum value of $|N(x, y)|$
\STATE Let $F\subset \bar{E}$ be the set of nonedges in $G$ such that for each $\{u, v\}\in F$, we have $\{u,v\}$ is independent from $\{x, y\}$ and $u\sim x, v\sim y$. Among these, choose a  nonedge $\{u, v\}$ with the maximum value of $|N(u, v)|$.
\STATE \label{oneway} If $|N(u, v)|+|N(x, y)|>|N(x, u)|+|N(v, y)|$ and $|N(u, v)|+|N(x, y)|>|N(u, y)|+|N(x, v)|$, delete the edges $\{x, u\}$ and $\{y, v\}$, and replace them with the edges $\{x, y\}$ and $\{v, y\}$.
\STATE \label{otherway} Otherwise, if $|N(u, y)|+|N(v, x)|>|N(x, u)|+|N(v, y)|$, delete the edges $\{x, u\}$ and $\{y, v\}$, and replace them with the edges $\{x, v\}$ and $\{u, y\}$.
\STATE If neither the conditions of steps \ref{oneway} or \ref{otherway} are met, return to step 2 and remove the edge $\{u, v\}$ from consideration.
\STATE If no edge in $F$ satisfies the requirements, return to step 1 and choose a different edge from $G$.
\end{algorithmic}
\end{algorithm}

We note that this algorithm can be seen as a revision of Algorithm \ref{Algone}; a similar revision can be made for Algorithm \ref{Algtwo}. We further note that the benefit of degree sequence preservation here may be outweighed by the expense of such an algorithm; the computation time is substantially higher, since in step 2, we must consider substantially more edges in $G$ than in the previous versions. As with Algorithms \ref{Algone} and \ref{Algtwo}, it is straightforward to show that this algorithm is monotone with respect to the global clustering coefficient; indeed, the number of length-two paths here is constant, and hence the only change is that we are increasing the total number of triangles in $G$. 

\subsection{Complexity}

Computationally, the most expensive step of both Algorithms \ref{Algone} and \ref{Algtwo} is the first, the calculation of the pair $\{x,y\} \in \bar{E}$ with the maximum (minimum) value of $|N(x,y)|$, which corresponds to finding the maximum (minimum) value of a subset of the elements of $A^2$.  In its most straightforward implementation, $A^2$ can be calculated in $\mathcal{O}(n^3)$ time, where $n$ is the number of vertices in $G$.  However, for many large, real-world networks this computational time can be prohibitive.  Algorithms \ref{Algone} and \ref{Algtwo} don't require the full calculation of $A^2$, only lists of the number of triangles and wedges in which each node participates.  The number of wedges can be calculated in $\mathcal{O}(n)$ time using the degree of each vertex. In \cite{Co09}, an elegant algorithm for efficiently enumerating all triangles in a network using MapReduce was introduced and, in \cite{NoWiPhBe10}, it was shown that this enumeration can be done in $\mathcal{O}(n)$ for power law graphs with an exponent of more than $\frac{7}{3}$ and a maximum vertex degree bounded by $\sqrt{n}$.  Additionally, the authors in \cite{NoWiPhBe10} provide experimental results which indicate that the calculation remains $\mathcal{O}(n)$ even when the max degree is bounded by $n-1$.   

Isolating the appropriate subset and finding its maximum (minimum) can be done in $\mathcal{O}(n)$ time.  Each iteration of the algorithm will change the number of triangles and wedges in the network.  However, it will only change the values of vertices $x, y, u,$ and $v$ and those of nodes in their neighborhoods.  These effects can be calculated {\it a priori} and the changes can be implemented through neighborhood-centric updates which involve summing (or subtracting) a few appropriate values, which takes $\mathcal{O}(n)$ time. Thus, the full enumeration of the number of triangles in the network only needs to be done once.

The rest of the algorithms consist of find the vertex $v \in N(x) \Delta N(y)$ with minimum $f_v$, checking that it meets the degree requirements, moving the edge and updating the lists of the numbers of wedges and triangles in which each vertex participates.  These can all be done in $\mathcal{O}(n)$ time for each iteration.  Thus, for the majority of real-world networks, the total cost of the algorithm is $\mathcal{O}(kn)$, where $n$ is the number of vertices in the graph and $k$ is the number of edges to be rewired. 

\section{Results on Generated Networks}
\label{sec:experiments}

We illustrate the effects of the rewiring algorithms by running them on simulated networks. We create synthetic networks of various types and then iteratively rewire them using one of the rewiring procedures until there are no valid rewiring moves left.  These experiments are run on Erd\"os-R\'enyi ($G_{n,p}$) networks \cite{ErRe59} and Barab\'asi-Albert preferential attachment networks \cite{BaAl99}. 

Both rewiring algorithms 
led to the global clustering coefficient of the network increasing monotonically. 
The ``Swing Toward Best" algorithm (Algorithm \ref{Algone}) had generally greater clustering gains than the ``Swing Away from Worst" algorithm (Algorithm \ref{Algtwo}), as can be seen in Figure~\ref{Fig:algorithm_comparison} on Erd\"os-R\'enyi $G_{n,p}$ networks with $n=100$ and $p=0.07$. This is expected, as the ``Swing Toward Best" 
algorithm is a global optimization procedure while the 
``Swing Away from Worst" algorithm is a local optimization procedure. Results from choosing edges to rewire randomly or probabilistically based on the two procedures are also plotted.  

It is easy to see how employing different methods for selecting the candidate edges lead to clustering gains of different sizes even within the ``Swing Toward Best'' scheme. The greedy versions of the algorithms raised the clustering the most (see the blue line on the left of Figure~\ref{Fig:algorithm_comparison}). Randomly selecting a candidate edge still increased clustering, but not as rapidly (see the red line on the left of Figure~\ref{Fig:algorithm_comparison}). In between was selecting candidate edges probabilistically (see the green line on the left of Figure~\ref{Fig:algorithm_comparison}).  Similar results can be seen within the ``Swing Away from Worst'' scheme, although random and probabilistic edge selection perform much more similarly to the optimum in this case (see the right side of Figure~\ref{Fig:algorithm_comparison}).

\begin{figure}[h]
\centering
\includegraphics[width=0.8\textwidth]{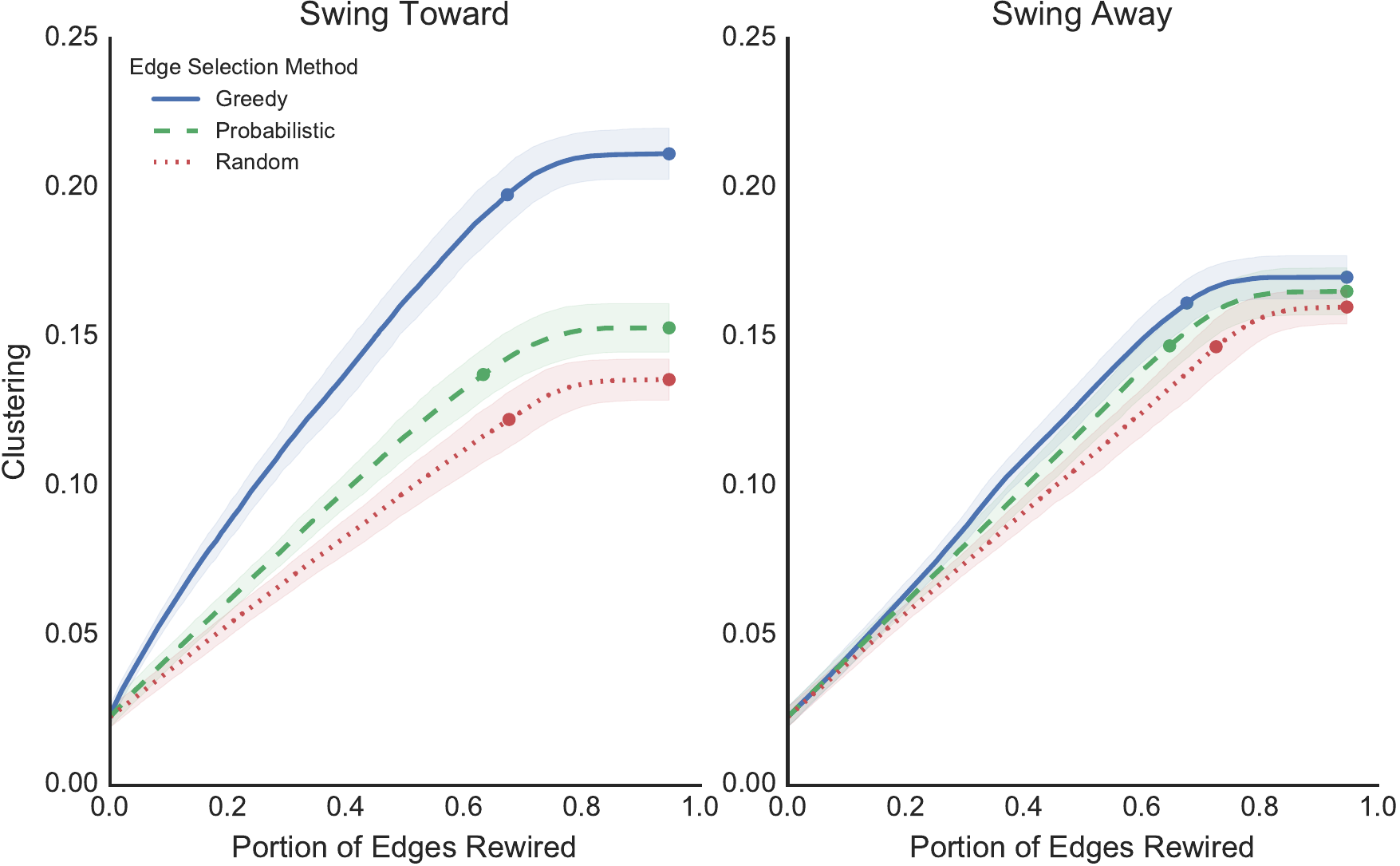}
\caption{\textbf{Rewiring a network's edges using either algorithm \ref{Algone} or \ref{Algtwo} raised its clustering coefficient.}
Simulated networks: 100 replications of Erd\"os-R\'enyi ($G_{n,p}$) networks ($n=100, p=0.07$). (Unless otherwise stated, for all figures, lines show the mean over ten runs of the algorithm; shading shows the standard deviation; the first dot is the point at which the first network stopped rewiring; the second dot is point at which the last network stopped rewiring.)
Different line colors/line styles indicate different methods for finding an edge to rewire.
This figure shows the outcomes of the ``Swing Toward" rewiring algorithm \ref{Algone} (left panel) and the ``Swing Away" rewiring algorithm \ref{Algtwo} (right panel).
}\label{Fig:algorithm_comparison}
\end{figure}

For the rest of the paper, all results are generated using the optimum version of the ``Swing Toward" algorithm, Algorithm~\ref{Algone}, unless otherwise specified.

In Figure~\ref{Fig:algorithm_comparison}, it can be seen that in all 100 runs, over 60\% of the edges in the Erd\"os-R\'enyi networks were rewired before the algorithm stopped and, in some cases nearly 100\% were.  It is unclear {\em a priori} whether algorithmic restrictions to preserve degree distribution will also limit the number of valid rewirings.  In Figure~\ref{Fig:degree_preservation}, the degree-preserving version of the optimum ``Swing toward Best'' algorithm is run on 100 instances of Erd\"os-R\'enyi $G_{n,p}$ networks with $n=100$ and $p=0.07$.  It can be seen that, in every instance, over 60\% of the edges were rewired before the algorithm completed, even with the modification to preserved degrees. While this procedures put greater requirements on what constituted a valid rewiring move, it did not seem to limit the number of edges that were rewired. However, it did not produce as large of an increase in clustering, which was expected due to the more limited options of where edges could be moved.

\begin{figure}[h]
\centering
\includegraphics[]{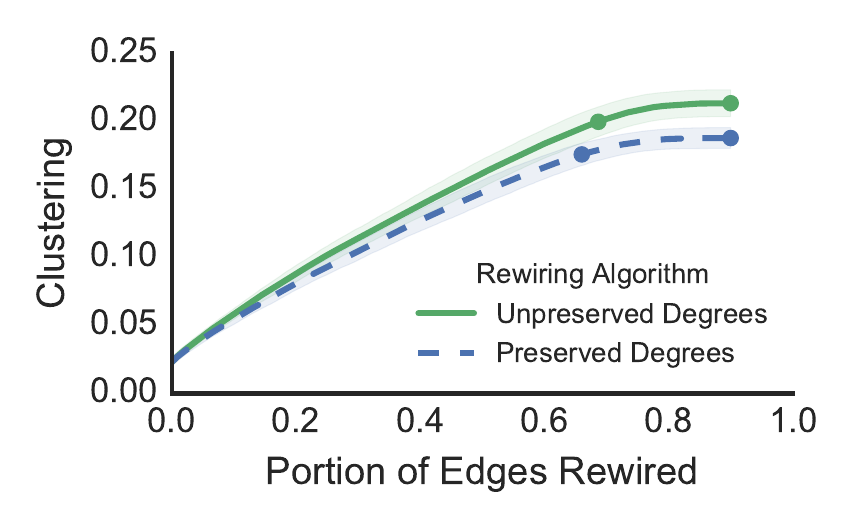}
\caption{\textbf{Rewiring a network while preserving the degree sequence also increased clustering, though not as quickly.} 
Simulated networks: 100 replications of Erd\'os-R\'enyi ($G_{n,p}$) networks ($n=100, p=.07$).
The green solid line shows rewiring with the ``Swing Toward" Algorithm \ref{Algone}, which does not preserve the degree sequence.
The blue dashed line shows rewiring with Algorithm \ref{Algthree}, which does preserve the degree sequence. 
Although both algorithms increased the clustering coefficient of the network, Rewiring Algorithm \ref{Algone}, which allowed for changes in the degree sequence of the network, increased the clustering more than Algorithm \ref{Algthree}. \\}\label{Fig:degree_preservation}
\end{figure}

\begin{figure}[h]
\centering
\includegraphics[]{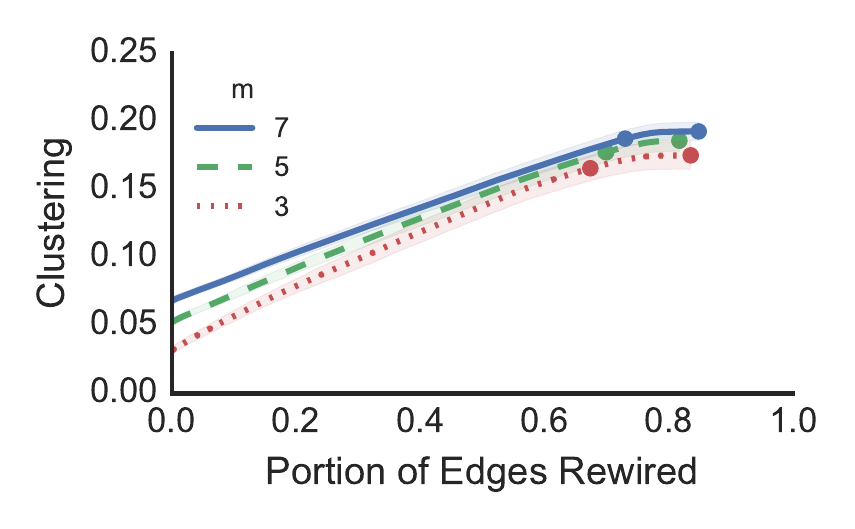}
\caption{\textbf{Rewiring raised the clustering of scale-free networks.}
Simulated networks: 100 replications of 100 node Barab\'asi-Albert networks.
Different line colors/styles correspond to different values of $m$, the number of nodes that a new node attaches to during the initial network generation.
}\label{Fig:Barabasi_Albert}
\end{figure}

We also ran the optimal version of the ``Swing Toward'' algorithm on Barab\`asi-Albert preferential attachment networks with a variety of parameters.  These results can be seen in Figure~\ref{Fig:Barabasi_Albert}.  Algorithm \ref{Algone} was run on 100 instances each of preferential attachment networks with $m=3,5,$ and $7$, where $m$ is the number of edges each new node added begins with.  Although the initial clustering coefficient was different for each of these three parameters, with a lower $m$ corresponding to a lower initial clustering, after the rewiring the clusterings converge to values that are much closer.  That is, the networks with a lower $m$ show greater gains in clustering coefficient than those with a higher $m$.  Again, we see that, in every instance, over 60\% of the edges are rewired before the algorithm halts.  We also see that the final clustering coefficient of the rewired networks hovers around 0.15-0.2, which is the same range as the final clustering coefficient in the rewired Erd\"os-R\'enyi networks.

\subsection{Different Kinds of Clustering}
In Section \ref{sec:algorithms}, we prove that these rewiring algorithms are guaranteed to increase a network's global clustering coefficient, but there is another common measure of a network's clustering: the average local clustering coefficient. The rewiring algorithms also typically increase average local clustering, as can be seen in Figure \ref{Fig:average_local_clustering} (100 runs of rewiring Erd\"os-R\'enyi $G_{n,p}$ networks with $n=100$ and $p=.07$). However, it is possible for the local clustering to stall or go down during portions of the rewiring process, whereas for total clustering it is not possible. 

\begin{figure}[h]
\centering
\includegraphics[width = 0.5\textwidth]{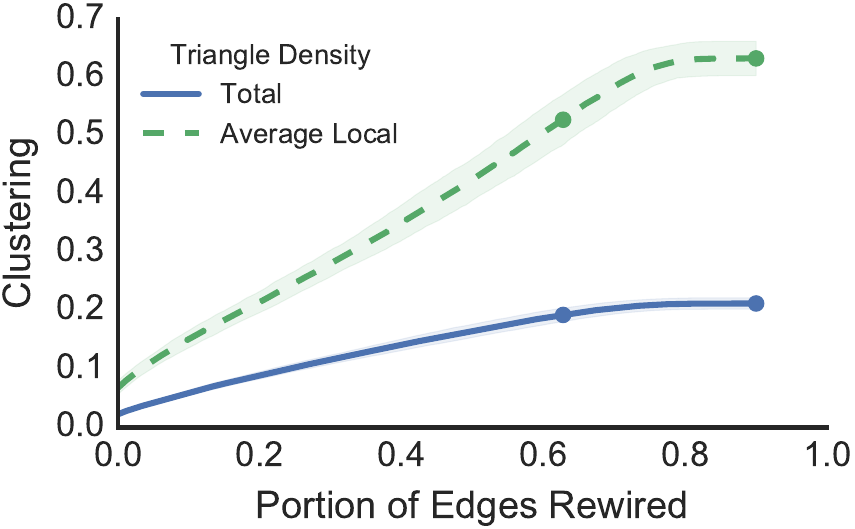}
\caption{\textbf{Rewiring networks increased different kinds of clustering.} The change in average local clustering over 100 runs on an Erd\"os-R\'enyi $G_{n,p}$ network ($n=100, p=.07$).
The blue solid line shows the total triangle density (global clustering coefficient).
The green dashed line shows the average local triangle density (average local clustering coefficient).
}\label{Fig:average_local_clustering}
\end{figure}

\subsection{Small World Creation}
In addition to clustering, the rewiring algorithms introduced here also alter other properties of the network structure. One such property is the average path length of the network that is undergoing rewiring. We examined this in more detail and found that the average path length increased during rewiring; as clusters formed it became harder to quickly navigate between them.  This can be seen in the left panel of Figure~\ref{Fig:path_length_small_world}
, which shows these changes on 100 runs of the optimal ``Swing Toward'' rewiring algorithm on Erd\"os-R\'enyi $G_{n,p}$ networks with $n=100$ and $p=0.07$.  However, the trade-off between clustering and path length was not constant. Clustering increased faster than path length during the majority of the edge rewires but, at the end of the rewiring process, the path length increased more quickly and the clustering coefficient stabilized. A network with a high clustering and low path length is commonly known as a small-world network \cite{WattsStrogatz1998} and the {\em small-world index} summarizes this relationship through the ratio of the clustering coefficient and the path length. A high small-world index indicates that a network has a particularly complex structure.  We plot the effect of optimal ``Swing Toward'' rewirings on the left of Figure~\ref{Fig:path_length_small_world}. The small-world index increased during the rewiring, as the clustering grew faster than the path length, but then plateaued and slightly decreased before the rewiring algorithm terminated.

\begin{figure}[h]
\centering
\includegraphics[width = \textwidth]{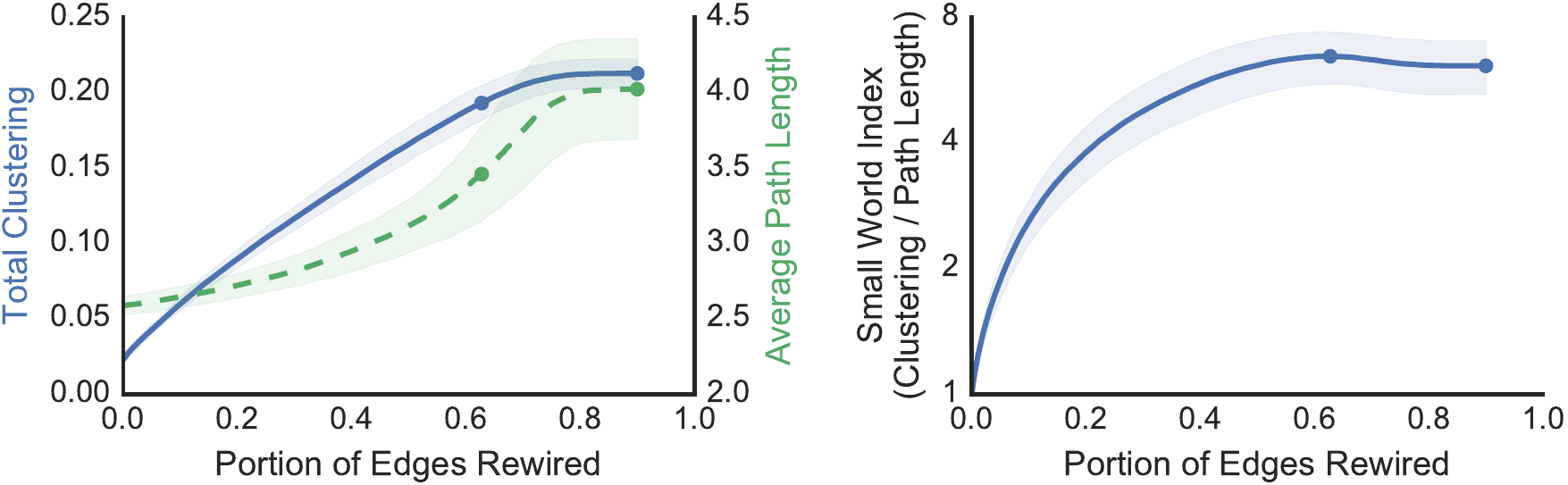}
\caption{\textbf{The rewiring algorithm increases clustering faster than it increases path length, creating a small-world network.} We show the effects of optimal ``Swing Toward'' rewiring on 100 replications of Erd\"os-R\'enyi $G_{n,p}$ networks with $n=100$ and $p=.07$.
On the left, the change in both the global clustering coefficient (blue solid line) and the average path length (green dotted line) are plotted.
On the right, the change in the small world index, the ratio of the clustering coefficient and the average path length, is plotted. 
Initially, average path length increases along with the clustering coefficient. However, because clustering increases faster than path length, the rewiring process increases the small-world index by several multiples relative to the initial value.
}\label{Fig:path_length_small_world}
\end{figure}

This rewiring procedure can be thought of as the reverse of the classic algorithm for creating a small world network, introduced by Watts and Strogatz in \cite{WattsStrogatz1998}. In that process, the network starts as a lattice and is rewired randomly. Randomizing every edge turns the network into an Erd\"os-R\'enyi network, but randomizing only a small percentage of the edges greatly decreases the network's mean path length while only slightly decreasing the clustering, yielding a small world structure. Figure \ref{Fig:Watts_Strogatz_Inverse:a} shows this behavior on a ring-lattice with 100 nodes and degree of 6. As more edges are randomized, the network's average local clustering and path length both decrease, although at different rates, eventually terminating in a random graph.

\begin{figure}[!h]
    \centering
    \begin{subfigure}[h]{\textwidth}
            \centering
            \includegraphics[width=0.6\textwidth]{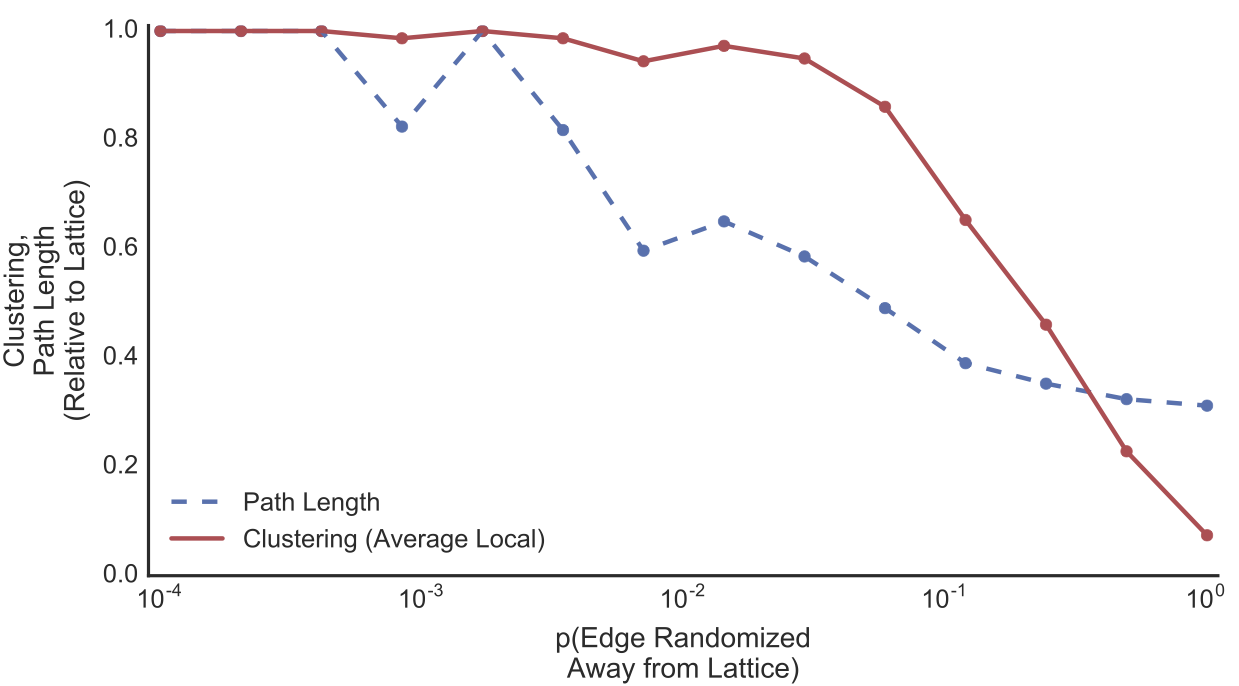}
            \caption{Changes in average local clustering (red solid) and path length (blue dashed) relative to the original ring lattice as a greater percentage of the edges of a ring lattice ($n=100$, $k=6$) are randomly rewired.}
    \label{Fig:Watts_Strogatz_Inverse:a}
    \end{subfigure}
    
\begin{subfigure}[h]{\textwidth}
            \centering
            \includegraphics[width=0.6\textwidth]{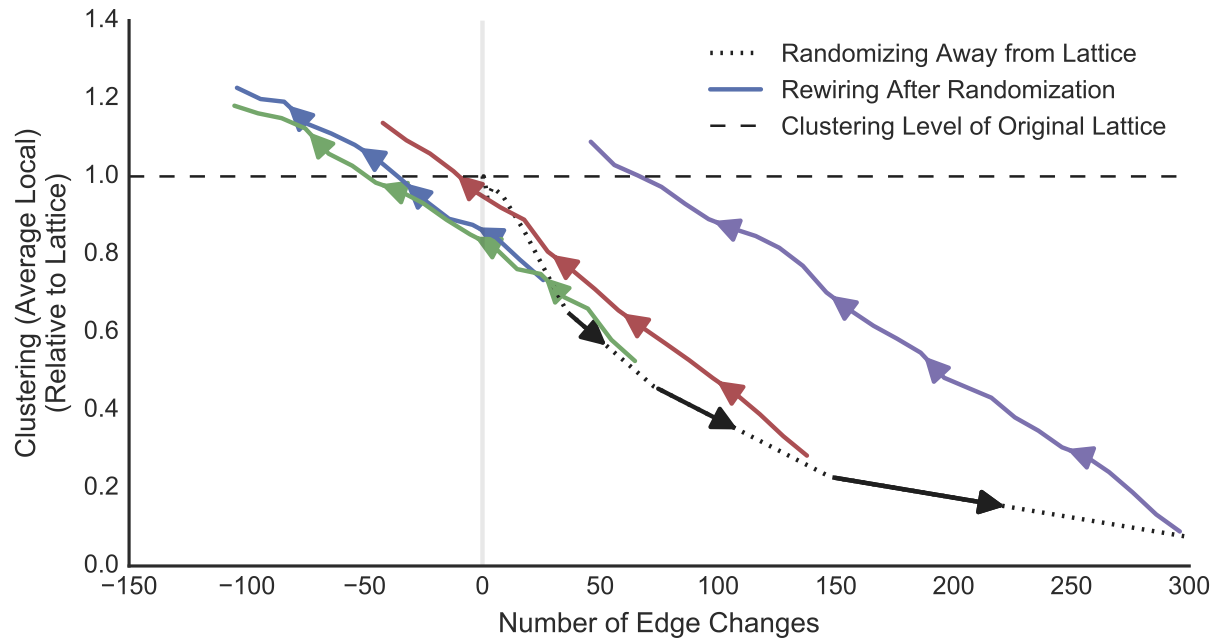}
            \caption{Changes in average local clustering relative the original ring lattice as edges are rewired randomly (black dotted) or rewired using Algorithm 1 after various levels of edge randomization: 300 edges (purple solid), 150 (red solid), 75 (green solid), and 50 (blue solid).} 
    \label{Fig:Watts_Strogatz_Inverse:b}
    \end{subfigure}
    
\begin{subfigure}[h]{\textwidth}
            \centering
            \includegraphics[width=0.6\textwidth]{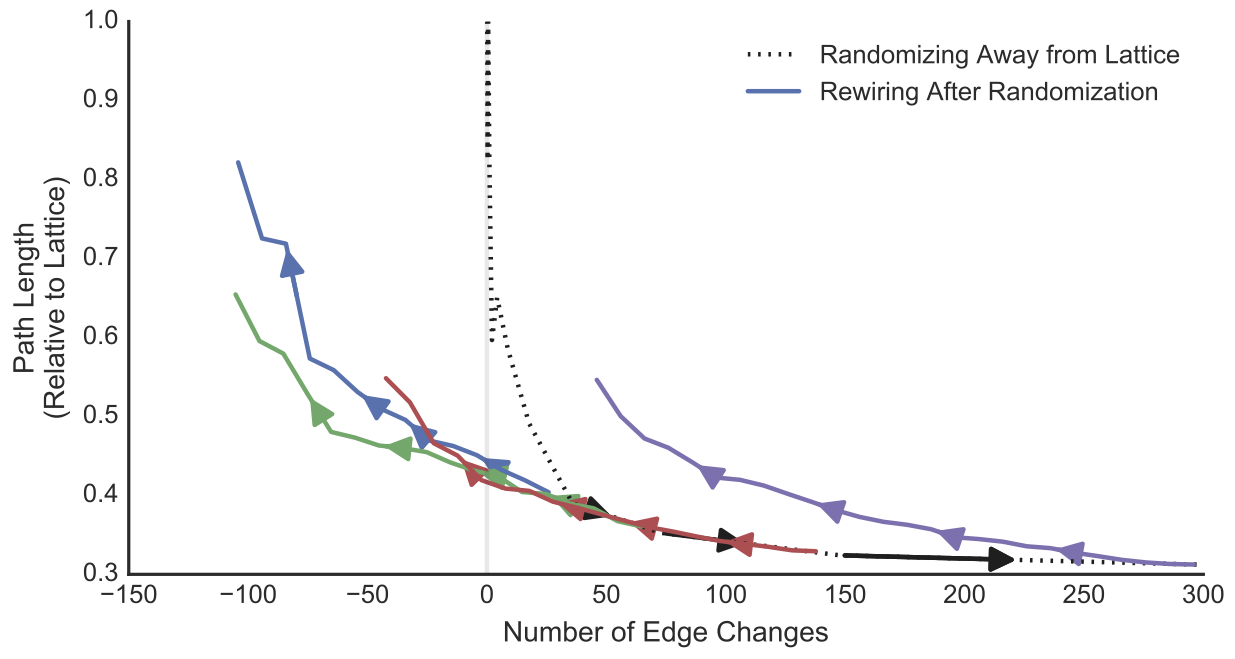}
            \caption{Changes in average path length relative the original ring lattice as edges are rewired randomly (black dotted) or rewired using Algorithm 1 after various levels of edge randomization: 300 edges (purple solid), 150 (red solid), 75 (green solid), and 50 (blue solid). }
    \label{Fig:Watts_Strogatz_Inverse:c}
    \end{subfigure}
    \caption{\textbf{Rewiring reversed the clustering loss of the Watts-Strogatz algorithm.}
    }
\label{Fig:Watts_Strogatz_Inverse}
\end{figure}

The rewiring algorithms introduced here can be thought of as inverting this randomization, creating clustering instead of destroying it.  After randomizing different portions of the original lattice, the rewiring algorithms presented above are employed to recreate clustering in the network, moving it away from a random graph, as seen in Figure \ref{Fig:Watts_Strogatz_Inverse:b}.  Here, each colored line with arrows pointing left represents starting the rewiring at a different level of randomization. The clustering can end up higher than that of the original lattice, both due to the randomization and because the rewiring altered the network's degree distribution.  Because of this the rewiring algorithm is not a true inverse of the randomization, but is indeed a reversal of the clustering loss.

As discussed earlier, the rewiring process increases the network's average path length, but when it terminates, the average path length is still lower than that of the original lattice. This can be seen in Figure \ref{Fig:Watts_Strogatz_Inverse:c}. Thus, we see that the rewiring algorithm creates a small world network, with high clustering and low path length relative to the lattice.

\section{Results on Real-World Networks}
\label{sec:real_experiments}

We further applied our algorithm to several real-world networks. These networks were formed from data collected by Traud et al.~and analyzed in \cite{facebookold} and \cite{facebooklong}. They present a snapshot of the Facebook network on a single day in September of 2005. At that time, Facebook was a fairly new entity, initially established as a Harvard-exclusive site called ``The Facebook" in February of 2004 and growing to include many colleges by September of 2005 under its current title of Facebook. At the time the data was collected, Facebook members needed to have a \verb|.edu| email address and thus only students and other members of the college's community, including faculty, were able to join \cite{Boyd_FB,Mayer2008329,Lewis2008330}. We have considered the Facebook networks of the California Institute of Technology (Caltech) and Reed College, two of the 100 American colleges and universities on Facebook at the time. While Facebook allowed for cross-college connections between individuals in 2005, the networks to which we apply our algorithm are the largest connected components of the Caltech and Reed networks excluding cross-college connections. 

The Caltech Facebook network, which was the smallest Facebook network in the snapshot, has 762 members (nodes) with 16651 connections (edges) between them and a global clustering coefficient of approximately 0.0971. 
As in Section \ref{sec:experiments}, the edges in this network were rewired using Algorithm~\ref{Algone}. Figure~\ref{Fig:clustering_caltech} shows the monotonic increase in the average local clustering coefficient of the Caltech network from the application of this algorithm.  A total of 10466 rewires were completed before the algorithm could no-longer find an edge to move, resulting in a network that was approximately 62.8\% rewired with a clustering coefficient of 0.230.

\begin{figure}[h!]
    \centering
    \includegraphics[width = 0.5\textwidth]{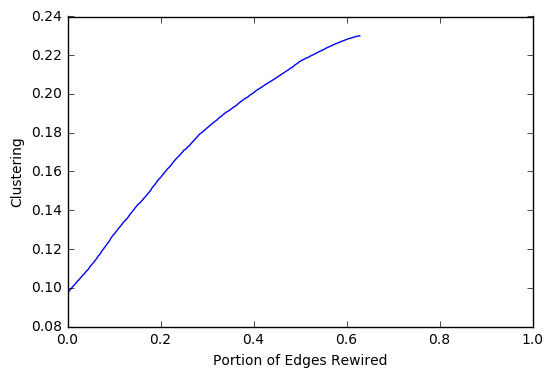}
    \caption{\textbf{Rewiring the Caltech Facebook network increases clustering.}}
    \label{Fig:clustering_caltech}
\end{figure}

Along with altering the clustering, the rewirings performed in Algorithm~\ref{Algone} affect other network properties, such as the average shortest path-length.  The change in average shortest path-length as edges are rewired in the Caltech Network compared to the change in clustering is shown in Figure~\ref{Fig:path_caltech_every100}. Measuring the average shortest path-length across the Caltech network throughout the rewiring process is computationally slow, so network properties were calculated every 100 rewiring steps with a total of 10400 rewiring steps taken before the graph became disconnected.  As was seen in the generated examples in Section \ref{sec:experiments}, although the average shortest path-length increases, in the initial rewiring steps it does so much more slowly than the average local clustering coefficient.  Similarly to earlier, when Algorithm~\ref{Algone} is used to rewire a small percentage of the edges in the network, the average local clustering coefficient increases significantly while other network properties are minimally changed. 

\begin{figure}[h!]
    \centering
    \includegraphics[width = 0.5\textwidth]{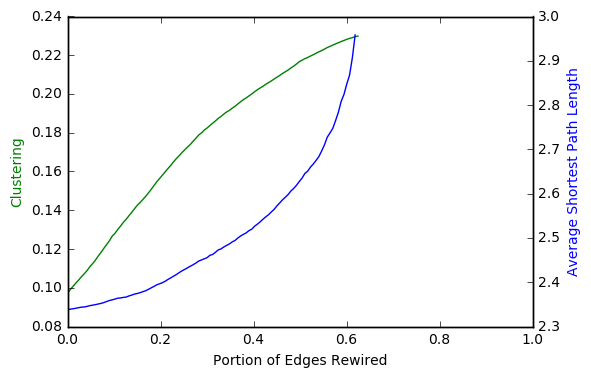}
    \caption{\textbf{Clustering (green) and path length (blue) of Caltech Facebook network with network properties being calculated every 100 rewiring steps.} }
    \label{Fig:path_caltech_every100}
 \end{figure}

A second social network examined was the Reed College Facebook network.  This network has 962 members with 18812 connections between them and a global clustering coefficient of approximately 0.0736, slightly lower than that of the Caltech network. 
Figure~\ref{Fig:clustering_reed} shows how the clustering of the Reed College network changed when Algorithm~\ref{Algone} was applied and an increasing number of edges were rewired. A total of 14762 rewires were completed before the algorithm could no-longer find an edge to move, resulting in a network that was approximately 78.5\% rewired with a clustering coefficient of 0.197. 

\begin{figure}[h!]
    \centering
    \includegraphics[width = 0.5\textwidth]{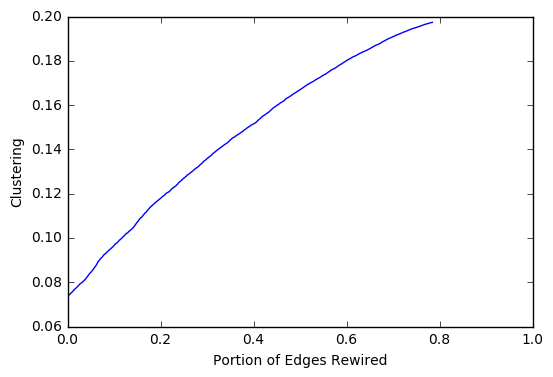}
    \caption{\textbf{Rewiring the Reed College Facebook network increases clustering.}}
    \label{Fig:clustering_reed}
\end{figure}

Similar treatment given to the Caltech network can be applied to the Reed network, computing average shortest path-length at every 500 rewires due to time considerations, until a rewire can no longer be completed. These results are shown in Figure~\ref{Fig:path_reed_every500}.  As seen in Figure~\ref{Fig:path_caltech_every100}, in the early stages of rewiring the average local clustering increases much more quickly than the average shortest path-length, again increasing the ``small-worldness'' of the network.

\begin{figure}[h!]
    \centering
    \includegraphics[width = 0.5\textwidth]{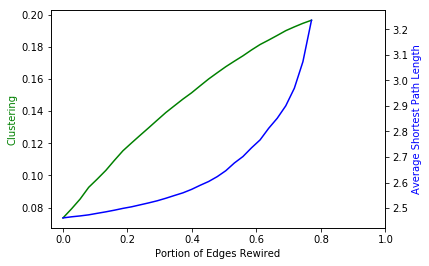}
    \caption{\textbf{Clustering (green) and path length (blue) of Reed Facebook network with network properties being calculated every 500 rewiring steps.} } 
    \label{Fig:path_reed_every500}
\end{figure}

It may seem surprising that both of these networks were able to rewire more than half (and in the case of the Reed network, almost 80\%) of their edges. Yet, it is important to note that these graphs likely reflect incomplete social networks due to the novelty of Facebook at the time \cite{facebooklong,Boyd_FB}. This would leave many missing connections between close friends, allowing the algorithm to exploit the missing links, rearrange edges, and increase clustering. 

\section{Concluding Remarks}
\label{sec:conclusions}

We have introduced a new set of algorithms for rewiring edges in a network with the goal of maximally increasing the global clustering coefficient and minimally impacting other network properties, with a focus on preserving degree distribution and average degree length.  Several variations of this algorithm were implemented, including one that preserves the degree sequence of the original network.  We proved that the algorithms strictly increase the global clustering coefficient of a network, provided examples of local optima under Algorithm \ref{Algone}, and discussed the time complexity of running the algorithms.

We ran the algorithms on a number of Erd\'os-R\'enyi ($G_{n,p}$) networks to examine how the clustering coefficient increased both as a function of the the number of nodes in the network (holding the edge density constant) and as a function of the number of edges rewired.  We also examined how these were affected by the exact variation of the algorithm used.  Other experiments examined how the increase in the average path length or the average local clustering coefficient compared to that of the global clustering coefficient.  Our experiments corroborate the theoretical findings that the algorithms presented strictly increase the global clustering coefficient. They also show the average local clustering coefficient increases, though not necessarily monotonically.

Additionally, the simulation experiments demonstrate that the rewiring procedures create a small world. This is essentially the reverse of the process that Watts and Strogatz used when they introduced the small world concept \cite{WattsStrogatz1998}. They started with a regular lattice and rewired edges randomly until the network was fully random. Along the way the path length decreased faster than the clustering decreased, and so there was a period during the rewiring in which clustering was high and path length was low: a small world. Here, our rewiring procedures are doing the reverse by starting with a randomized network and deliberately rewiring edges to create a more regular network. And again, along the way a small world is created. Thus, the rewiring algorithms introduced here give a reciprocal story to how to build a small world network, given a fixed number of nodes and edges. These rewiring algorithms may be useful to those seeking to grow small world networks in engineered physical systems, or to explain how naturally-occurring physical systems construct themselves into small worlds.

We also ran Algorithm \ref{Algone} on several networks from a single-time snapshot of Facebook data \cite{facebookold,facebooklong}. The results from these experiments indicate that the real-world networks behave similarly to the Erd\'os-R\'enyi networks in that, while both clustering and average shortest path-length increase as the number of rewirings increase, when only a small number of edges are rewired the average local clustering increases significantly more quickly than the average shortest path-length. 

Future work includes determining whether using approximations of the initial values of $A^2$ and $N(x,y)$ (the most computationally expensive element of the algorithm) affect the performance of the algorithm.  One aspect of these algorithms which, for the sake of brevity, was not discussed in this paper is that of applications where these algorithms may be deployed.  These include areas such as community detection and graph partitioning.  Future work concerns research into these applications.  Another important aspect of future work is determining, both experimentally and theoretically, the ideal number of edges which should be rewired for various applications in which modifying a network to increase its clustering coefficient is desirable.

\section*{Acknowledgments}
J.A. was supported by the SUTD-MIT Postdoctoral Programme. The work of C.K. was performed under the auspices of the U.S. Department of Energy by Lawrence Livermore National Laboratory under Contract DE-AC52-07NA27344.


\bibliographystyle{siam}  

\end{document}